  \def\Cref#1{<#1>}%
  \def\({}
  \def\){}
\newtheorem{lemma}{Lemma}[section]
\newtheorem{remark}{Remark}[section]
\newtheorem{theorem}{Theorem}[section]
\newtheorem{definition}{Definition}[section]
\newtheorem{claim}{Claim}[section]
\title{Hypergraph Splitting-Off \\ via Element-Connectivity Preserving Reductions\thanks{Grainger College of Engineering, University of Illinois, Urbana-Champaign. Email: \{karthe, chekuri, smkulka2\}@illinois.edu. Supported in part by NSF grant CCF-2402667.}}
\author{Karthekeyan Chandrasekaran \and Chandra Chekuri \and Shubhang Kulkarni}
\date{}
\begin{document}
\maketitle
\begin{abstract}
  Bérczi, Chandrasekaran, Király, and Kulkarni \cite{Bérczi_Chandrasekaran_Király_Kulkarni_2023} recently described a splitting-off procedure in hypergraphs that preserves local-connectivity and outlined some applications.
In this note we give an alternative proof via element-connectivity preserving reduction operations in graphs \cite{HindO96,Chekuri_Korula_2009,CK14}.  
\end{abstract}


\section{Introduction}
\label{sec:intro}
\paragraph{Graphs and Hypergraphs.} A \emph{hypergraph} is a pair $H = (V, E)$, where $V$ is a set of vertices and $E \subseteq 2^V \setminus \{\emptyset\}$ is a multiset of hyperedges (each a subset of $V$). A graph is a hypergraph in which each hyperedge has cardinality at most two. We will assume that both graphs and hypergraphs in this paper do not have edges of cardinality one (i.e., no self-loops).  Given a hypergraph $H=(V,E)$ and a set $S \subseteq V$ we let $\delta_H(S) = \{e \in E \mid e \cap S \neq \emptyset, e \cap (V-S) \neq \emptyset\}$ denote the set of hyperedges crossing $S$, that is, those hyperedges with at least one vertex in $S$ and at least one vertex in $V-S$. Given two distinct vertices $s,t \in V$, we let $\lambda_H(s,t)$ denote the minimum-cut between $s$ and $t$ in $H$; formally $\lambda_H(s,t) = \min_{S: S \cap \{s,t\} =1} |\delta_H(S)|$. 
The incidence graph of a hypergraph $H=(V,E)$ is a bipartite graph $G_H=(U_V \uplus U_E, F)$ where 
one side has a vertex-node $u_v$ for each $v \in V$ and the other side has a hyperedge-node $u_e$ for each $e \in E$, and there is an edge between $u_v$ and $u_e$ iff $v \in e$.  It is well-known that $\lambda_H(s,t)$ can be computed as follows: we compute the max $s$-$t$ flow in $G_H$ where each hyperedge-node $(u_e)_{e \in E}$ is given unit \emph{vertex} capacity and each $u_v, v \in V-\{s,t\}$ is given infinite vertex capacity (we can also replace infinity with the degree of the vertex).  For graphs, we can simplify this by simply computing the max $s$-$t$ flow in the original graph with unit edge capacities.

\paragraph{Splitting-off in graphs.} 
Splitting off in graphs is an operation that was introduced by Lovász \cite{Lovasz} and has found many applications. The operation is to replace two edges $su$ and $sv$ that are incident to a vertex $s$ by the edge $uv$ that bypasses $s$. The goal is to do this operation while preserving the edge-connectivity between vertices that do not involve $s$. In many applications, this operation is repeated at a vertex $s$ until $s$ is isolated and hence can be removed. The initial work of Lovász was in the context of preserving the global edge-connectivity of the graph, and in particular, he considered Eulerian graphs. Subsequently, Mader proved a stronger theorem on splitting-off to preserve local-edge-connectivity. 

\begin{theorem}[Mader \cite{Mader}]
  \label{thm:mader}
  Let $G=(V\cup \{s\},E)$ be an undirected multi-graph, where $deg(s)
  \neq 3$ and $s$ is not incident to a cut edge of $G$.  Then $s$
  has two neighbours $u$ and $v$ such that the graph $G'$ obtained
  from $G$ by replacing $su$ and $sv$ by $uv$ satisfies
  $\lambda_{G'}(x,y) = \lambda_G(x,y)$ for all $x,y \in V \setminus
  \{s\}$.
\end{theorem}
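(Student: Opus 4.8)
The plan is to reduce the existence of a good split to a purely set‑theoretic covering condition and then to derive a contradiction by uncrossing the obstructing sets. Write $d(S) := |\delta_G(S)|$ and abbreviate $\lambda(x,y) := \lambda_G(x,y)$. First I would record the elementary effect of a split on cut values: if $G'$ is obtained by replacing $su,sv$ with $uv$, then for every $S$ with $s \notin S$ one has $d_{G'}(S) = d(S) - 2$ when $u,v \in S$ and $d_{G'}(S) = d(S)$ otherwise (the case $s \in S$ is handled by passing to the complement). Consequently $\lambda_{G'}(x,y) \le \lambda(x,y)$ always, and the only way a split can destroy the requirement of a pair $x,y \in V\setminus\{s\}$ is through a set with $u,v$ on one side and $s$ on the other. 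Calling a set $X$ with $s \notin X$ \emph{dangerous} if $d(X) \le 1 + \max\{\lambda(p,q) : p \in X,\ q \in V\setminus(X\cup\{s\})\}$, a short check (using that any $S$ separating $p,q$ has $d(S)\ge\lambda(p,q)$) shows that $\{u,v\}$ is admissible, i.e.\ preserves every requirement, if and only if \emph{no dangerous set contains both $u$ and $v$}.

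Assume, for contradiction, that no admissible pair exists; then for every pair of neighbours $u,v$ of $s$ there is a dangerous set $X_{uv}\ni u,v$ with $s\notin X_{uv}$. The engine of the proof is uncrossing via the two standard identities for the undirected cut function, $d(X)+d(Y) = d(X\cap Y)+d(X\cup Y) + 2\,e(X\setminus Y,\, Y\setminus X)$ and $d(X)+d(Y) = d(X\setminus Y)+d(Y\setminus X)+2\,e(X\cap Y,\, \overline{X\cup Y})$, where $e(\cdot,\cdot)$ counts edges between disjoint sets. Using that each edge from $s$ landing in $X$ contributes to $d(X)$ (because $s\notin X$), I would show that two dangerous sets sharing a neighbour of $s$ can be merged into a dangerous set containing their union. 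Iterating this uncrossing over all the $X_{uv}$ should produce a single dangerous set $X^\ast$ with $s\notin X^\ast$ containing the entire neighbourhood $N(s)$. But then $\delta_G(X^\ast)$ contains all $\deg(s)$ edges at $s$, so $d(X^\ast)\ge \deg(s)$, while the requirement witnessed inside $X^\ast$ can exceed $d(X^\ast)$ minus the $s$-edges by at most the dangerous slack; comparing the two pushes $\deg(s)$ into a small exceptional range.

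The main obstacle is exactly this last comparison together with the uncrossing bookkeeping: the inequalities are tight, so the contradiction only materialises once the two hypotheses are spent. The absence of a cut edge at $s$ is what keeps the merged boundary from collapsing (it rules out a bridge that no split can bypass and guarantees $\lambda(s,\cdot)\ge 2$), and $\deg(s)\ne 3$ excludes the genuinely exceptional configuration in which three edges at $s$ serve three mutually critical pairs so that no pair can be split. Carefully disposing of these boundary cases — rather than the uncrossing itself — is where the real work lies. I note that the route developed in this paper is different in spirit: instead of uncrossing cuts directly, one encodes the local connectivities around $s$ as element-connectivities among terminals in an auxiliary graph and obtains the split by applying the delete/contract element-connectivity-preserving reductions of \cite{HindO96,Chekuri_Korula_2009,CK14}, which is the viewpoint we adopt for the hypergraph generalisation.
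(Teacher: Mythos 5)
First, a point of comparison: the paper does not prove \Cref{thm:mader} at all --- it is quoted as a classical result of Mader \cite{Mader} and used only as background, and the paper explicitly remarks that its own machinery does \emph{not} specialize to it (an h-splitting-off applied to a graph may produce a hypergraph, so \Cref{thm:complete-splitting-off} restricted to graphs is strictly weaker than Mader's theorem). So your sketch is necessarily a reconstruction of the classical argument rather than of anything proved here; your closing remark correctly identifies that the paper's route is a different one aimed at the hypergraph statement.

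As a reconstruction of the classical proof, your frame is right (dangerous sets, the equivalence ``the pair $u,v$ is admissible iff no dangerous set contains both,'' and uncrossing via the two cut identities), but there is a genuine gap at the central step. The claim that two dangerous sets sharing a neighbour of $s$ can be merged into a dangerous set containing their union, iterated to produce a single dangerous set $X^{\ast}\supseteq N(s)$, is false in general. Submodularity gives $d(X\cup Y)+d(X\cap Y)\le d(X)+d(Y)\le r(X)+r(Y)+2$ (writing $r(X)$ for the largest requirement crossing $X$), but the pair witnessing $r(X)$ may lie entirely inside $X\cup Y$, so one cannot conclude $d(X\cup Y)\le r(X\cup Y)+1$. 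Indeed, if some dangerous $X$ contained all of $N(s)$, then $d(X)\ge \deg(s)$ and $r(X)\le d(X\cup\{s\})=d(X)-\deg(s)$ force $\deg(s)\le 1$; so your merging claim would prove the theorem for every $\deg(s)\ge 2$ with no further hypotheses, contradicting the standard $K_4$ counterexample with $\deg(s)=3$. The actual proof (Mader; see also Frank's treatment) works with \emph{maximal} dangerous sets and establishes a trichotomy: $N(s)$ cannot lie in one maximal dangerous set; if two maximal dangerous sets cover $N(s)$ then $s$ is incident to a cut edge; and if three are needed then $\deg(s)=3$. That case analysis --- where both hypotheses are spent --- is the heart of the argument, not boundary bookkeeping deferred to the end, so the sketch as written does not close.
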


The splitting-off operation has been considered for directed graphs as well, however, they are limited to Eulerian digraphs and graphs that satisfy similar requirements --- see \cite{Mader,Frank88,BFJ95}. In this work, we will focus on undirected hypergraphs and graphs. 

While splitting-off has many applications for edge-connectivity problems, it is not quite applicable for problems related to vertex-connectivity in graphs. In the same vein, there is no simple analogue of splitting-off in hypergraphs. In graphs, a notion called element-connectivity has played an important role in helping with vertex-connectivity problems. Element-connectivity admits a graph reduction step, and this has found several applications. This reduction step was originally introduced in the context of global-connectivity by Hind and Oellerman \cite{HindO96}, and was subsequently extended to the local-connectivity setting by Chekuri and Korula \cite{Chekuri_Korula_2009,CK14}. One of the consequences of the element-connectivity reduction is that it leads to a hypergraph --- we will describe the formal claim shortly. Very recently Bérczi, Chandrasekaran, Király, and Kulkarni \cite{Bérczi_Chandrasekaran_Király_Kulkarni_2023} described a splitting-off procedure in hypergraphs that preserves local-connectivity and showed some interesting applications.  The purpose of this work is to point out that the hypergraph splitting-off result can be derived via the element-connectivity preserving reduction operation.  The proof is short, and we believe that it also provides an alternative understanding of the hypergraph splitting-off operation.

We note that \cite{Bérczi_Chandrasekaran_Király_Kulkarni_2023} provides a strongly polynomial-time algorithm for the hypergraph splitting-off operation in the capacitated case, which is non-trivial to prove, and an analogue of such a result is not known in the element-connectivity setting so far. The rest of the paper is organized as follows. We first define element-connectivity and the reduction operation theorem from \cite{CK14}. We then describe the hypergraph splitting-off operations and the result from \cite{Bérczi_Chandrasekaran_Király_Kulkarni_2023}. We relate the two in the subsequent section and derive the splitting-off theorem in hypergraphs via element-connectivity.

\subsection{Element-Connectivity Preserving Graph Reduction}\label{sec:introduction}

Let $G =(V, E)$ be an undirected graph and $T\subseteq V$ be a set of \emph{terminal} vertices. The vertices in the set $V - T$ are referred to as \emph{non-terminals}. The non-terminals and edges of the graph are collectively called \emph{elements}. For two distinct terminals $u, v \in T$, the $(u,v)$-\emph{element-connectivity}, denoted by $\kappa'_{(G, T)}(u,v)$ is defined as follows:
$$\kappa'_{(G, T)}(u,v) := \min\left\{|F|: F \subseteq (V - T) \cup E \text{ where $u$ and $v$ are disconnected in $G - F$}\right\},$$
i.e., the minimum number of elements required to be deleted in order to disconnect the terminals $u$ and $v$.

\begin{figure}[htb]
    \centering
    \includegraphics[width=0.9\linewidth]{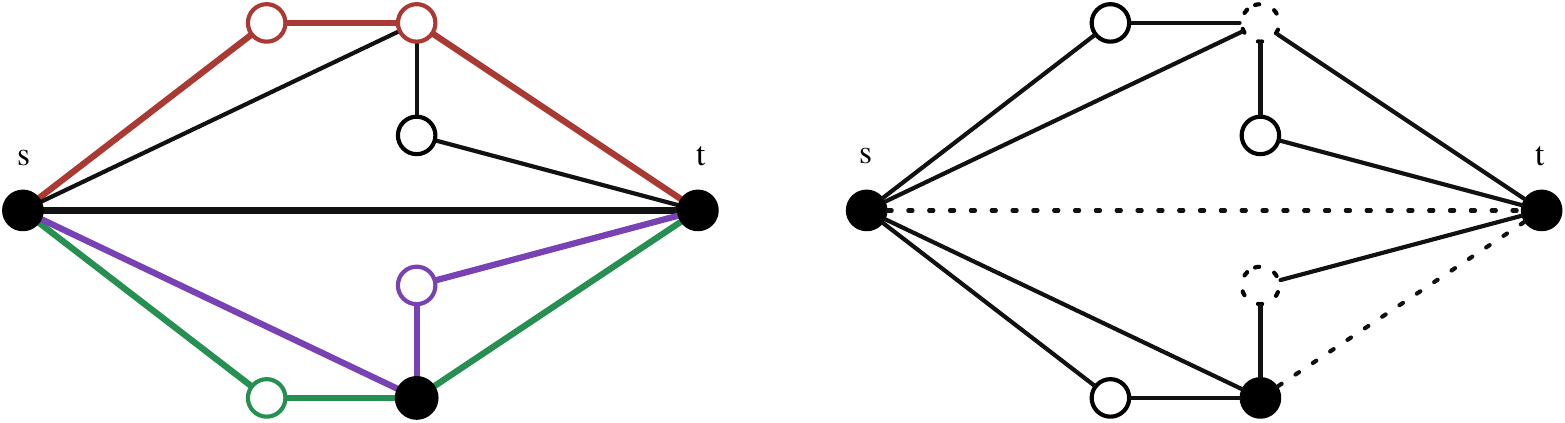}
    \caption{Example of element connectivity from \cite{CRX15}. The terminals are shown in filled black circles. $s$ and $t$ are $3$ element connected; 3 element-disjoint paths and a 3 element-cut are shown.}
    \label{fig:elem-conn}
\end{figure}

The following lemma is easy to deduce from Menger's theorem (equivalently, maxflow-mincut) and is well-known in the literature, so we do not provide
a proof.

\begin{lemma}
  \label{lem:elem-basic}
  For every pair of distinct terminals $u,v\in T$, the element-connectivity $\kappa'_{(G, T)}(u,v)$ is 
  equal to the maximum number of pairwise element-disjoint $(u, v)$ paths in $G$. Moreover, $\kappa'_{(G, T)}(u,v)$ can be computed efficiently by solving a max $u$-$v$-flow problem where each edge has unit edge capacity, each non-terminal has unit vertex capacity, and each
  terminal that is not $u$ or $v$ has vertex capacity equal to its degree.
\end{lemma}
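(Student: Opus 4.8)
The plan is to reduce the computation of $\kappa'_{(G,T)}(u,v)$ to an integral max-flow/min-cut computation in an auxiliary network, and then read off both assertions from integrality of the optimal flow. The only feature distinguishing element-connectivity from ordinary edge- or vertex-connectivity is that a separating set $F$ may delete edges \emph{and} non-terminal vertices but never a terminal; this mixed deletion model is exactly what a node-capacitated flow captures once the capacities are chosen as in the statement.

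First I would build a directed network $N$ using the standard vertex-splitting gadget. Each vertex $w \in V$ is split into an in-copy $w^{\text{in}}$ and an out-copy $w^{\text{out}}$ joined by an arc $w^{\text{in}} \to w^{\text{out}}$ carrying the prescribed vertex capacity of $w$: capacity $1$ for a non-terminal, and capacity $\deg_G(w)$ for a terminal other than $u$ or $v$. Each undirected edge $\{a,b\} \in E$ is replaced by the two unit-capacity arcs $a^{\text{out}} \to b^{\text{in}}$ and $b^{\text{out}} \to a^{\text{in}}$; the source is $u^{\text{out}}$, the sink is $v^{\text{in}}$, and the split-arcs of $u$ and $v$ are left uncapacitated. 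All capacities are integral, so the max-flow/min-cut theorem supplies an integral maximum $u$-$v$ flow whose value equals the minimum capacity of a $u$-$v$ cut of $N$ (after cancelling flow on any pair of anti-parallel edge-arcs, which we may always do in the integral setting).

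Second, I would match the min-cut with $\kappa'_{(G,T)}(u,v)$. The key observation, and the one step requiring care, is that the split-arc $w^{\text{in}} \to w^{\text{out}}$ of a terminal $w \neq u,v$ behaves as if it had infinite capacity: cutting it costs $\deg_G(w)$, whereas the same separation is achieved at no greater cost by cutting the (at most $\deg_G(w)$) unit-capacity edge-arcs incident to $w$. Hence some minimum cut uses only non-terminal split-arcs and edge-arcs, each of which encodes respectively the deletion of a non-terminal or of an edge; its capacity therefore equals the size of the corresponding element-cut, and conversely every element-cut $F$ induces a $u$-$v$ cut of capacity $|F|$. This identifies the min-cut value with $\kappa'_{(G,T)}(u,v)$.

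Third, I would match the max-flow with the maximum number of element-disjoint paths. Decomposing the integral maximum flow into $\kappa'_{(G,T)}(u,v)$ arc-disjoint unit-flow $u$-$v$ paths and discarding cycles, I would project each back to $G$ and shortcut repeated vertices to obtain simple $u$-$v$ paths; unit capacity on edge-arcs makes these pairwise edge-disjoint, and unit capacity on every non-terminal split-arc ensures no non-terminal lies interior to two of them, so they are element-disjoint (they may still share terminals, which is allowed). Conversely, routing one unit along each of a family of element-disjoint $(u,v)$ paths is a feasible flow, since such paths share neither an edge nor a non-terminal. Combining the two directions with the flow/cut equality established above yields the claimed equality, and since $N$ has integral capacities bounded by the degrees, the flow can be computed efficiently, proving the ``moreover'' clause. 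The main obstacle throughout is the terminal-capacity argument together with the careful bookkeeping needed to turn a flow-path decomposition into genuinely element-disjoint \emph{simple} paths.
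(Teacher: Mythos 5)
The paper gives no proof of this lemma, stating only that it is ``easy to deduce from Menger's theorem (equivalently, maxflow-mincut)'' and well known; your proposal is precisely the standard fleshing-out of that route (vertex-splitting, integral max-flow/min-cut, uncrossing terminal split-arcs into incident edge-arcs, and flow-path decomposition), and it is correct. Nothing in your argument diverges from what the authors intend, so there is nothing further to reconcile.
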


For $e \in E$, we let $G_{/e}$ denote the graph obtained from $G$ by contracting the edge $e$ and $G - e$ denote the graph with edge $e$ deleted. The next theorem by Chekuri and Korula states that for every edge between non-terminal vertices, at least one of the graphs obtained by either deleting the edge or contracting the edge maintains all pairwise element connectivities amongst terminals. 
\begin{theorem}\label{thm:elem-conn-preserving-reduction}
\cite{Chekuri_Korula_2009}
    Let $G =(V, E)$ be an undirected graph and $T \subseteq V$ be a set of terminals. Let $pq$ be an arbitrary edge where $p, q \in V - T$ and let $G_1 := G - pq$ and $G_2 := G_{/pq}$. Then, at least one of the following holds: 
    \begin{enumerate}
        \item $\kappa'_{(G_1, T)}(u,v) = \kappa'_{(G,T)}(u,v)$ $\forall u, v \in T$
        \item $\kappa'_{(G_2, T)}(u,v) = \kappa'_{(G,T)}(u,v)$ $\forall u, v \in T$
    \end{enumerate}
\end{theorem}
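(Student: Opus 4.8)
The plan is to show that the two ways in which the conclusion could fail --- a terminal pair whose connectivity strictly drops under deletion, and a pair whose connectivity strictly drops under contraction --- cannot occur simultaneously. Write $k_{uv} := \kappa'_{(G,T)}(u,v)$. First I would record two monotonicity bounds. Deletion never increases connectivity, since every element-disjoint family of $(u,v)$-paths in $G_1$ is one in $G$; moreover at most one path of a maximum family can use the edge $pq$ (element-disjoint paths share no edge), so $\kappa'_{(G_1,T)}(u,v) \in \{k_{uv}-1,\, k_{uv}\}$. For contraction I would map a minimum $(u,v)$-element-cut of $G$ to one of $G_2$ of no larger size, deleting the merged non-terminal $w$ or leaving the cut essentially unchanged according to how $p,q$ sit in it (if both $p,q$ are deleted the image shrinks by one), giving $\kappa'_{(G_2,T)}(u,v) \le k_{uv}$; and since at most one path of a maximum family passes through each of $p,q$ (they share no non-terminal), contraction can merge at most two such paths at $w$ and lose at most one, so $\kappa'_{(G_2,T)}(u,v) \in \{k_{uv}-1,\, k_{uv}\}$. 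Hence both quantities equal $k_{uv}$ unless they drop by exactly one, and it suffices to rule out a simultaneous drop.

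Next I would translate each failure into a statement purely about minimum cuts of $G$, using Lemma~\ref{lem:elem-basic}. Deletion drops the connectivity of a pair $(s,t)$ iff $G$ has a minimum $(s,t)$-element-cut (of size $\lambda := k_{st}$) that contains the edge $pq$ with $p,q$ \emph{undeleted} and on opposite sides --- equivalently a partition $V = P \uplus Q \uplus Z_1$ with $s,p \in P$, $t,q \in Q$, $Z_1 \subseteq V-T$, and $pq$ an essential crossing edge (obtained by adding $pq$ back to the $(\lambda-1)$-cut witnessing the drop). Contraction drops the connectivity of a pair $(u,v)$ iff $G$ has a minimum $(u,v)$-element-cut (of size $\mu := k_{uv}$) that \emph{deletes both} $p$ and $q$ --- a partition $V = R \uplus S \uplus Z_2$ with $u \in R$, $v \in S$ and $\{p,q\} \subseteq Z_2$ --- since deleting the merged vertex $w$ then yields a cut one smaller in $G_2$. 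Both equivalences are short consequences of adding back, respectively identifying, the edge $pq$.

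The core is then an uncrossing step. Assuming both failures occur, I would overlay the two partitions above, splitting $V$ into the regions obtained by intersecting $\{P,Q,Z_1\}$ with $\{R,S,Z_2\}$, and apply submodularity (and posimodularity) of the element-cut set function --- most cleanly via the vertex-split graph in which element-connectivity becomes edge-connectivity and the cut function is symmetric submodular. The two failure modes are deliberately asymmetric: the first cut keeps $p,q$ alive on opposite sides and relies on the edge $pq$, whereas the second deletes both $p,q$. I expect that, after choosing the appropriate combination (union/intersection versus the two set-differences, according to where $s,t,u,v$ fall), one of the resulting configurations is either an $(s,t)$-separator of size at most $\lambda$ in which $p,q$ are no longer forced onto opposite sides with $pq$ essential --- contradicting the deletion-failure characterization --- or a $(u,v)$-separator of size smaller than $\mu$, contradicting minimality. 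The deletion of $\{p,q\}$ supplied by the second cut is precisely what lets the first cut dispense with the edge $pq$.

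I expect the uncrossing to be the main obstacle. The difficulties are (i) fixing a single submodular function that puts deletion of non-terminals and cutting of edges on an equal footing, so that both failure modes are genuine minimum cuts of it; (ii) selecting the correct uncrossing operation and verifying that the four terminals land so that the uncrossed set is really an $(s,t)$- or $(u,v)$-separator; and (iii) handling degenerate overlaps among $\{s,t,u,v\}$ and the placement of $p,q$ among the overlay regions. By contrast, the two monotonicity bounds and the cut characterizations of the preceding paragraphs are routine.
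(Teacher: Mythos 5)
A preliminary remark: this theorem is imported by the paper from \cite{Chekuri_Korula_2009} and is not proved in the present text, so your attempt has to be measured against the known flow/cut proof of Chekuri and Korula rather than against anything written here. Your preparatory material is correct and matches that proof's setup: deletion and contraction each change any $\kappa'_{(G,T)}(u,v)$ by at most one; deletion fails for a pair $(s,t)$ exactly when some minimum $(s,t)$-element-cut, written as a partition $(A,B,C)$ with $C\subseteq V-T$ deleted and $E(A,B)$ cut, has $p\in A$, $q\in B$ and uses the edge $pq$ essentially; and contraction fails for a pair $(u,v)$ exactly when some minimum $(u,v)$-element-cut deletes both $p$ and $q$. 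These equivalences are routine, as you say, and the tension you identify between the two configurations is precisely what the real proof exploits.

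The gap is that the proof stops where the theorem begins. The entire content of the result is the uncrossing of the two tight cuts, and your proposal only gestures at it: you do not fix the submodular function, do not choose which of the union/intersection/difference combinations to form, do not verify that the uncrossed sets actually separate $s$ from $t$ (or $u$ from $v$) with the terminals in admissible positions, and do not produce the inequality that forces one of the two cuts below its minimum. This step is genuinely delicate: the two pairs $(s,t)$ and $(u,v)$ can be disjoint, the two cuts have unrelated sizes $\lambda$ and $\mu$, and the bookkeeping must track both deleted non-terminals and crossing edges over the nine regions obtained by overlaying the partitions $(A,B,C)$ and $(A',B',C')$, paying particular attention to where $p$, $q$, and the edge $pq$ land. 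In \cite{Chekuri_Korula_2009,CK14} this is an explicit counting argument of about a page, and standard symmetric submodularity of the vertex-split cut function does not hand it to you without the case analysis on terminal placement. As written, the proposal is a correct plan with the decisive lemma unproved, so it cannot be accepted as a proof of the theorem.
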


We will denote the operation of deleting or contracting an edge to preserve element-connectivities between all terminal pairs as an element-connectivity preserving reduction operation. 

\begin{remark}
  By repeatedly applying the above theorem we transform the original graph $G=(T \cup (V-T),E)$ into a new graph $H$ such that (i) $H$ is a minor of $G$ and (ii) the non-terminals form a stable set in $H$. We can also assume that the terminals form a stable set by sub-dividing each edge between two terminals and placing a non-terminal on it. This resulting bipartite graph can be viewed as a hypergraph $H'=(T,E_H)$ with vertex set $T$, and for any $u,v \in T$, $\kappa'_H(u,v)$ is the same as the hypergraph edge-connectivity $\lambda_{H'}(u,v)$.  
\end{remark}

We state and prove the following useful lemma that is intuitive but has not been explicitly noted so far.

\begin{lemma}\label{lem:deletions-later-allow-deletions-now}
Let $G=(V,E)$ be an instance of an element-connectivity instance with $T$ as the terminal set. Let $G'=(V',E')$ be a minor of $G$ after a sequence of element-connectivity preserving operations. Suppose deleting an edge $p'q' \in E'$ between two non-terminals is element-connectivity preserving. Let $pq$ be the edge in $G$ corresponding to $p'q'$. Then deleting $pq$ in $G$ is an element-connectivity preserving operation. 
\end{lemma}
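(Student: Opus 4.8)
The plan is to exploit two facts and then sandwich. First, the reduction operations (edge deletions and contractions between non-terminals) can only \emph{decrease} element-connectivity, never increase it. Second, deleting the edge $pq$ \emph{commutes} with the sequence of reduction operations that produced $G'$, so that $G - pq$ reduces to exactly $G' - p'q'$ under that same sequence. Combining these with the hypotheses gives a chain of inequalities that collapses to the desired equality.

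First I would record the monotonicity fact: for any graph $H$ with terminal set $T$ and any edge $e$ between non-terminals, $\kappa'_{(H - e, T)}(u,v) \le \kappa'_{(H, T)}(u,v)$ and $\kappa'_{(H_{/e}, T)}(u,v) \le \kappa'_{(H, T)}(u,v)$ for all $u, v \in T$. Deletion is immediate. For contraction I would use the path characterization of Lemma~\ref{lem:elem-basic}: any family of pairwise element-disjoint $(u,v)$-paths in $H_{/e}$ lifts to such a family in $H$, since at most one path can pass through the merged non-terminal, and that path can be re-expanded through the edge $e$ and its two endpoints without colliding with the others.

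Next comes the commutativity step, which I expect to be the main technical point. Write the reduction sequence as $G = G_0 \to G_1 \to \cdots \to G_k = G'$, where each step deletes or contracts an edge between non-terminals. Since $p'q'$ survives in $G'$, the edge $pq$ is never deleted nor contracted along this sequence; its endpoints may be renamed by contractions of incident edges, but it persists as an edge throughout. I would then verify that deleting $pq$ commutes with each individual operation $O_i$, because $O_i$ acts on an edge different from $pq$; hence applying $O_1, \dots, O_k$ to $G - pq$ produces exactly $G_k - p'q' = G' - p'q'$. The key subtlety to flag here is that I do \emph{not} need these operations to be element-connectivity preserving on $G - pq$ — only the monotonicity from the previous step, which holds for \emph{every} graph.

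Finally I would assemble the chain. By hypothesis, and since the original sequence preserves element-connectivity, $\kappa'_{(G' - p'q', T)}(u,v) = \kappa'_{(G', T)}(u,v) = \kappa'_{(G, T)}(u,v)$ for every $u,v \in T$. Since $G' - p'q'$ is obtained from $G - pq$ by the monotone operations above, $\kappa'_{(G' - p'q', T)}(u,v) \le \kappa'_{(G - pq, T)}(u,v)$; and since deletion is monotone, $\kappa'_{(G - pq, T)}(u,v) \le \kappa'_{(G, T)}(u,v)$. Together these yield $\kappa'_{(G, T)}(u,v) \le \kappa'_{(G - pq, T)}(u,v) \le \kappa'_{(G, T)}(u,v)$, forcing equality throughout and hence showing that deleting $pq$ in $G$ is element-connectivity preserving.
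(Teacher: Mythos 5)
Your proof is correct and follows essentially the same route as the paper's: both arguments reduce to showing $\kappa'_{(G-pq,T)}(u,v) \ge \kappa'_{(G'-p'q',T)}(u,v)$ by lifting element-disjoint paths back through the reduction sequence, and then sandwiching with the hypotheses. The only difference is presentational --- the paper lifts a maximum path family from $G'-p'q'$ to $G-pq$ in one step using the fact that each contracted vertex corresponds to a connected subgraph of non-terminals, whereas you decompose this into per-operation monotonicity plus an explicit commutativity argument (that $G-pq$ reduces to $G'-p'q'$ under the same sequence), which the paper leaves implicit.
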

\begin{proof}
Consider two distinct terminals $s,t$. Since $p'q'$ can be deleted in $G'$, there are $\kappa'_{G'}(s,t) = \kappa'_G(s,t)$ element-disjoint $s$-$t$ paths in $G'-p'q'$. Let $\calP'$ be such a collection of paths. Since $G'-p'q'$ is a minor of $G$ where only edges between non-terminals are deleted or contracted, each vertex in $G'$ corresponds to a contraction of a connected subgraph consisting only of non-terminals of $G$.
Thus the paths in $\calP'$ can be mapped to a collection $\calP$ of element disjoint $s$-$t$ paths in $G-pq$ such that $|\calP'| = |\calP|$. Thus $\kappa'_{G-pq}(s,t) = \kappa'_G(s,t)$. Since $s,t$ were arbitrary, we see that deleting $pq$ in $G$ is an element-connectivity preserving operation.
\end{proof}

The preceding lemma shows that if an edge $pq$ between non-terminals is not deletable at some stage then 
it does not become eligible for deletion after a sequence of other element-connectivity preserving reduction operations and thus, $p$ and $q$ will be merged eventually.

\subsection{Connectivity Preserving Hypergraph Splitting-Off}
We begin with a basic fact relating hyperedge-connectivity and hyperedge-disjoint paths in a hypergraph. A $u-v$ path in a hypergraph $H=(V, E)$ is a sequence $u_1=u, e_1, u_2, e_2, \ldots, u_{k-1}, e_{k-1}, u_k=v$, where $u_i, u_{i+1}\in e_i$ for every $i\in [k-1]$. The following lemma is a consequence of Menger's theorem. 
\begin{lemma}
  \label{lem:hyperedge-connectivity}
  Let $H=(V, E)$ be a hypergraph. 
  For every pair of distinct vertices $u,v\in V$, the hyperedge-connectivity $\lambda_{H}(u,v)$ is 
  equal to the maximum number of pairwise hyperedge-disjoint paths between $u$ and $v$ in $G$. 
\end{lemma}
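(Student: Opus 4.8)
The plan is to reduce the statement to the vertex-capacitated form of Menger's theorem (equivalently, max-flow/min-cut) applied to the incidence graph $G_H = (U_V \uplus U_E, F)$ introduced in Section~\ref{sec:intro}. The guiding observation is that $u$-$v$ paths in the hypergraph $H$ are in natural correspondence with $u_u$-$u_v$ paths in $G_H$. Since $G_H$ is bipartite with sides $U_V$ and $U_E$, any $u_u$-$u_v$ path in $G_H$ alternates between vertex-nodes and hyperedge-nodes; reading off its vertex-nodes $u_u = u_{v_1}, u_{v_2}, \dots, u_{v_k} = u_v$ and its intervening hyperedge-nodes $u_{e_1}, \dots, u_{e_{k-1}}$ yields exactly a hypergraph path $v_1, e_1, v_2, \dots, e_{k-1}, v_k$ (each consecutive pair $v_i, v_{i+1}$ lies in $e_i$ by the definition of $F$), and this correspondence is reversible. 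Under it, two hypergraph $u$-$v$ paths are hyperedge-disjoint precisely when the two corresponding paths in $G_H$ share no hyperedge-node.

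Next I would equip $G_H$ with the capacities described in Section~\ref{sec:intro}: each hyperedge-node $u_e$ receives unit vertex capacity and each vertex-node receives infinite capacity. The vertex-capacitated Menger theorem then equates the maximum number of $u_u$-$u_v$ paths that are pairwise disjoint on the unit-capacity hyperedge-nodes with the minimum total capacity of a node set separating $u_u$ from $u_v$ in $G_H$. By the path correspondence above, the left-hand quantity is exactly the maximum number of pairwise hyperedge-disjoint $u$-$v$ paths in $H$, which is the quantity we wish to compute.

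It then remains to identify the right-hand quantity with $\lambda_H(u,v)$. Because every vertex-node has infinite capacity, any finite separator consists solely of hyperedge-nodes, say $\{u_e : e \in C\}$ for some $C \subseteq E$; such a set separates $u_u$ from $u_v$ in $G_H$ if and only if deleting the hyperedges in $C$ destroys every $u$-$v$ path in $H$. Taking $S$ to be the set of vertices reachable from $u$ after deleting $C$ shows $\delta_H(S) \subseteq C$ for a $u$-$v$ separating set $S$, and conversely each $\delta_H(S)$ is itself such a separator; hence the minimum separator size equals $\min_{S : |S \cap \{u,v\}| = 1} |\delta_H(S)| = \lambda_H(u,v)$. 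Combining the two equalities furnished by Menger's theorem gives the claimed identity.

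The only points requiring care — and the ones I would treat as the crux — are the two translations at the ends of the Menger argument: that a minimum separator in $G_H$ may be assumed to use only hyperedge-nodes (justified by the infinite capacities on $U_V$, noting in particular that $u_u$ and $u_v$ themselves are never deleted), and that hyperedge-node separators correspond exactly to hyperedge cuts $\delta_H(S)$. One should also observe that it is harmless to restrict attention to paths that repeat no hyperedge, so that an integral max-flow decomposition yields genuine hyperedge-disjoint hypergraph paths; this follows by shortcutting and affects neither counted quantity.
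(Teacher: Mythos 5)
Your proof is correct and follows exactly the route the paper intends: the paper states this lemma without proof as ``a consequence of Menger's theorem,'' having already set up in the introduction the reduction to the incidence graph with unit capacities on hyperedge-nodes and infinite capacities on vertex-nodes, which is precisely the argument you carry out. The two translation steps you flag as the crux (finite separators use only hyperedge-nodes, and hyperedge-node separators correspond to cuts $\delta_H(S)$) are handled correctly.
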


Bérczi, Chandrasekaran, Király, and Kulkarni defined the hypergraph splitting-off operation as follows. 

\begin{definition}\label{defn:splitting-off}
\cite{Bérczi_Chandrasekaran_Király_Kulkarni_2023}
    Let $H=(V+s, E)$ be a hypergraph. 
    \begin{enumerate}

    \item A \emph{merge almost-disjoint hyperedges at $s$}  operation:
         \begin{enumerate}
            \item pick a pair of hyperedges $e, f\in \delta_H(s)$ such that $e\cap f=\{s\}$,
            \item replace $e$ and $f$ with their union in $H$, i.e. $E := E - e - f + (e \cup f)$.
        \end{enumerate}
        \item A \emph{trim hyperedge at $s$} operation:
        \begin{enumerate}
            \item pick a hyperedge $e\in \delta_H(s)$,
            \item delete the vertex $s$ from $e$, i.e. $E := E - e + (e - s)$.
        \end{enumerate}
        \item A hypergraph $H'=(V, E_{H'})$ is obtained by an {h-splitting-off} operation at $s$ from $H$ if $H'$ is obtained from $H$ by either a merge almost-disjoint hyperedges at $s$ operation or a trim hyperedge at $s$ operation.

        \item A hypergraph $H^*=(V+s, E^*)$ is a \emph{complete h-splitting-off} at $s$ from $H$ if $d_{H^*}(s) = 0$ and $H^*$ is obtained from $H$ by repeatedly applying h-splitting-off operations at $s$.
    \end{enumerate}
\end{definition}

\begin{figure}[htb]
    \centering
    \includegraphics[width=0.9\linewidth]{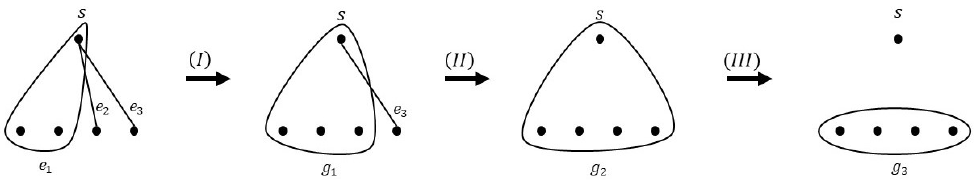}
    \caption{Example, from \cite{Bérczi_Chandrasekaran_Király_Kulkarni_2023}, of complete $h$-splitting-off at a vertex $s$ from a hypergraph. 
     Operations (I) and (II) correspond to h-merge almost-disjoint hyperedges operations and Operation (III)
       corresponds to an h-trim hyperedge operation.}
    \label{fig:hypergraph-splitting-off}
\end{figure}

The following theorem by Bérczi, Chandrasekaran, Király, and Kulkarni  shows that for every hypergraph, there exists a complete h-splitting-off at a specified vertex so that all pairwise connectivities amongst the remaining vertices are preserved. 

\begin{restatable}{theorem}{thmhsplittingoff}\label{thm:complete-splitting-off}
\cite{Bérczi_Chandrasekaran_Király_Kulkarni_2023}
    Given a hypergraph $H=(V+s, E)$, there exists a hypergraph $H^*$ that is a complete h-splitting-off at $s$ from $H$ such that $\lambda_{H^*}(u,v) = \lambda_{H}(u,v)$ for every distinct $u,v \in V$.
\end{restatable}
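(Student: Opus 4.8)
The plan is to convert $H$ into an element-connectivity instance in which splitting off $s$ becomes an element-connectivity preserving reduction, and then to read the desired hypergraph off the reduced graph. First I would build a graph $G$ from the incidence graph $G_H=(U_V\uplus U_E,F)$: delete the vertex-node $u_s$, add a clique on its former neighbourhood $\{u_e : s\in e\}$, and take the terminal set to be $T=\{u_v : v\in V\}$, so that the only non-terminals are the hyperedge-nodes $u_e$. Replacing $u_s$ by a clique on its neighbours faithfully reproduces the ``junction'' behaviour of $s$ without representing $s$ as a single unit-capacity node: a hyperedge-disjoint $u$--$v$ path that passes through $s$ consumes two distinct hyperedges incident to $s$, which in $G$ corresponds to routing through a single clique edge $u_eu_f$ between two unit-capacity hyperedge-nodes. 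Using Lemma~\ref{lem:hyperedge-connectivity} together with the Menger/flow characterization of element-connectivity in Lemma~\ref{lem:elem-basic}, I would verify that $\kappa'_{(G,T)}(u,v)=\lambda_H(u,v)$ for every distinct $u,v\in V$; the clique neither adds nor destroys connectivity among the terminals.

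The only non-terminal edges of $G$ are the clique edges, so these are exactly the edges to which Theorem~\ref{thm:elem-conn-preserving-reduction} applies, and I would read off the two reduction moves hypergraph-theoretically. Contracting a clique edge $u_eu_f$ identifies the two hyperedge-nodes into a node adjacent to $(e\cup f)\setminus\{s\}$ that is still in the clique, i.e.\ a \emph{merge} of $e$ and $f$ at $s$; detaching a hyperedge-node from the clique (deleting all of its clique edges) leaves it adjacent only to $e\setminus\{s\}$, i.e.\ a \emph{trim} of $e$. Applying Theorem~\ref{thm:elem-conn-preserving-reduction} repeatedly until the non-terminals form a stable set (as in the Remark following it) eliminates the clique entirely, yielding a graph $G^*$ whose non-terminals form a stable set and hence a hypergraph $H^*$ on vertex set $V$ with $d_{H^*}(s)=0$. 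Since every reduction step preserves all terminal element-connectivities, $\lambda_{H^*}(u,v)=\kappa'_{(G^*,T)}(u,v)=\kappa'_{(G,T)}(u,v)=\lambda_H(u,v)$ for all $u,v\in V$.

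What remains---and what I expect to be the main obstacle---is to certify that $H^*$ is a \emph{legal} complete h-splitting-off in the sense of Definition~\ref{defn:splitting-off}, where a merge is permitted only for almost-disjoint hyperedges ($e\cap f=\{s\}$). The key lemma I would prove is that forced contractions are automatically legal merges: if $e$ and $f$ share a vertex $w\neq s$, then $u_e$ and $u_f$ have the common terminal-neighbour $u_w$, and any path using the clique edge $u_eu_f$ can be rerouted along the detour $u_e\,u_w\,u_f$ (the terminal $u_w$ is freely traversable, and since each of $u_e,u_f$ lies on at most one path in a maximum element-disjoint collection, their incident edges are free), so the clique edge is deletable. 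Contrapositively, whenever Theorem~\ref{thm:elem-conn-preserving-reduction} forces a contraction---deletion not being connectivity-preserving---the two hyperedges satisfy $e\cap f=\{s\}$ and the merge is legal. The subtler difficulty is matching the granularity and order of the element-connectivity moves to splitting-off operations, since a single clique-edge deletion is not by itself a trim (a trim detaches a hyperedge-node from the whole clique). Here I would invoke Lemma~\ref{lem:deletions-later-allow-deletions-now}: because an edge that cannot be deleted now never becomes deletable later, I can adopt the rule of preferring deletions, schedule all eligible deletions as early as possible, and bundle the deletions incident to a single hyperedge-node into one genuine trim, while every contraction is a legal almost-disjoint merge. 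Making this scheduling precise---so that the reordered sequence of legal merges and trims is connectivity-preserving at every step and produces exactly $H^*$---is the crux of the argument; the base case $d_H(s)=1$ is immediate, since a lone hyperedge at $s$ supports no path through $s$ and may be trimmed for free, and the process terminates because each operation decreases $d_H(s)$ by one.
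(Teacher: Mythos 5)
Your proposal is correct and follows essentially the same route as the paper: view the incidence graph as an element-connectivity instance with $V$ as terminals, simulate $s$ by a clique gadget, reduce the clique edges via Theorem~\ref{thm:elem-conn-preserving-reduction}, and certify that forced contractions are legal merges because a shared vertex $w\neq s$ gives a detour through the terminal $u_w$ that makes the clique edge deletable --- this is exactly the paper's Claim~\ref{claim:near-disjoint}, proved with the same rerouting and the same appeal to Lemma~\ref{lem:deletions-later-allow-deletions-now}. The only real difference is the gadget: the paper puts the clique on $d$ \emph{new} non-terminals $s_1,\dots,s_d$ joined by pendant edges $\{e_i,s_i\}$ to the hyperedge-nodes, so that deleting a pendant edge is precisely a trim and the scheduling you flag as the crux becomes routine (reduce the clique, then delete whatever pendant edges you can, then contract the rest), whereas with your clique sitting directly on the hyperedge-nodes the trims never correspond to individual edge deletions and only appear when the final bipartite graph is read off as a hypergraph --- the partition of $\{e_1,\dots,e_d\}$ induced by the contractions determines the merges, and each block receives exactly one trim at the end.
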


We emphasize that Theorem \ref{thm:complete-splitting-off} specialized for graphs does not correspond to Mader's Theorem (i.e., Theorem \ref{thm:mader}). This is because, the h-splitting-off operation applied to a graph could give only a hypergraph (and not necessarily a graph). 

\section{Connectivity Preserving Hypergraph Splitting-Off via Element-Connectivity Preserving Graph Reductions}\label{sec:alternative-proof}

Bérczi, Chandrasekaran, Király, and Kulkarni \cite{Bérczi_Chandrasekaran_Király_Kulkarni_2023} proved \Cref{thm:complete-splitting-off} as an application of a more general result of Bernáth and Király \cite{Bernath-Kiraly} on \emph{function covers} and \emph{skew-supermodularity}. A short elementary proof can also be shown by following the proof of Bern\'{a}th and Király for the specific application. 
While \cite{CK14}'s proof for element-connectivity reduction is based on simple flow and cut arguments, an alternate proof appears in \cite{CRX15} that relies on more abstract arguments. 
In this section, we give a proof of \Cref{thm:complete-splitting-off} using \Cref{thm:elem-conn-preserving-reduction}.

\paragraph{Intuition:} Before we give a formal proof we outline the core idea which we believe is simple. Let $H=(V,E)$ be a hypergraph and let $G_H=(V \cup E, E')$ be its bipartite graph representation. We view $G_H$ as an element-connectivity instance in which $V$ is the set of terminals and $E$ is the set of non-terminals. It is easy to see that $\kappa'_{(G_H, V)}(a,b) = \lambda_H(a,b)$ for all distinct $a,b \in V$. Let $s \in V$. Suppose we want to eliminate vertex $s$ from the hypergraph $H$ without affecting the hyperedge connectivity between other vertices. This is the same as eliminating terminal $s$ from $G_H$ without affecting the element-connectivity between the terminals in $V - \{s\}$. How can we eliminate the terminal vertex $s$ in the element-connectivity setting? Consider an arbitrary pair of distinct terminals $a,b \in V -\{s\}$. We can make $s$ a \emph{non-terminal} vertex but we endow it with a \emph{vertex capacity} equal to its degree $\text{deg}(s)$; recall Lemma~\ref{lem:elem-basic} on computing the element-connectivity. Since the basic element-connectivity reduction operations only work with non-terminals with unit capacity, we need to handle $s$ having capacity $\text{deg}(s)$. We employ a standard trick where we replace $s$ with a gadget graph that simulates its capacity. A natural choice is to use a clique of $d$ non-terminals but one can also use a $d \times d$ grid in planar settings (see Figure \ref{fig:elem-conn-grid}); this idea was already used in \cite{CK14} to eliminate a terminal in an application. Replacing $s$ by a gadget graph with new non-terminals results in an element-connectivity instance/graph that has edges between non-terminals. We can then use the element-connectivity reduction operations to eliminate the edges between non-terminals without disrupting all pairwise element-connectivities; eliminating edges between non-terminals gives a new graph $G'$ that can then be viewed as a hypergraph $H'$ on $V-\{s\}$; since all pairwise element-connectivities were preserved from $G_H$ to $G'$, we obtain that all pairwise edge-connectivities are preserved from $H$ to $H'$. The proof is completed by showing that the element-connectivity preserving reduction operations correspond to h-splitting-off operations in the original hypergraph. 
We now restate and prove \Cref{thm:complete-splitting-off}. 

\begin{figure}[htb]
    \centering
    \includegraphics[width=0.9\linewidth]{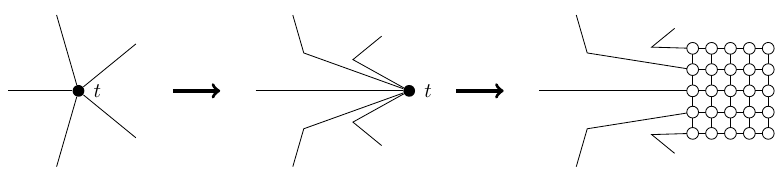}
    \caption{Replacing a terminal by a grid of non-terminals, figure from \cite{CK14}.}
    \label{fig:elem-conn-grid}
\end{figure}

\thmhsplittingoff*
\begin{proof}

Let $e_1, e_2, \ldots, e_d$ be the hyperedges incident to $s$ in $H$. 
We now describe the construction of the hypergraph $H^*$ from $H$ (see Figure \ref{fig:proof-construction} for an example). 
\begin{enumerate}
\item Let $G_0:=((V+s)\cup E, E_0)$ be the bipartite representation of $H$. We note that the nodes $e_1, e_2, \ldots, e_d$ are adjacent to $s$ in $G_0$. 
\item We obtain the graph $G_1$ from $G_0$ by replacing the vertex $s$ with a clique on a set $S_1:=\{s_1, s_2, \ldots, s_d\}$ of $d$ vertices and replacing the edges $\{e_i,s\}$ with the edges $\{e_i, s_i\}$ for every $i\in [d]$. 
\item We consider the graph $G_1$ with $V$ being the set of terminals and obtain the graph $G_2$ from $G_1$ by performing element-connectivity preserving reduction operations on all edges in $E_{G_1}[S_1]$. 
Denote the set of vertices in $G_2$ that are obtained as a result of the reduction operations as $S_2$, i.e., $S_2:=V(G_2)\setminus (V\cup E)$. 
\item We consider the graph $G_2$ with $V$ being the set of terminals and obtain the graph $G_3$ from $G_2$ by deleting as many edges incident to $S_2$ as possible while preserving element-connectivity between all terminal pairs. Since we only performed edge-deletion operations to obtain $G_3$ from $G_2$, the vertex set of $G_2$ and $G_3$ are the same. 
\item We obtain the graph $G_4$ from $G_3$ by contracting each vertex in $S_2$ with all its neighbors. We observe that the graph $G_4$ is bipartite with $V\subseteq V(G_4)$. 
\item Construct the hypergraph $H^*$ on vertex set $V$ associated with the bipartite graph $G_4$, i.e., for each node $e\in V(G_4)\setminus V$, add a hyperedge over the set of vertices that are adjacent to node $e$ in $G_4$. Add an isolated vertex $s$ to $H^*$. 
\end{enumerate}

\begin{figure}[htb]
    \centering
    \includegraphics[width=\linewidth]{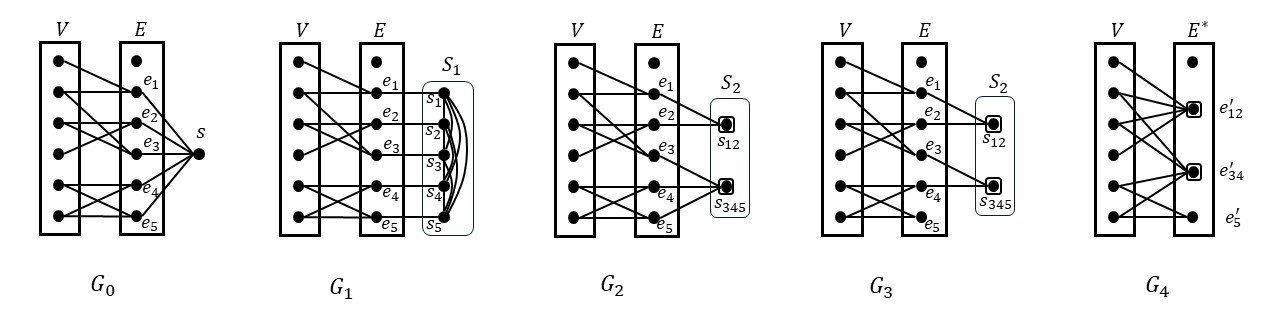}
    \caption{Example construction of the graphs in the proof of Theorem \ref{thm:complete-splitting-off}. For simplicity, we assume that the degree of $s$ in $H$ is $5$ with $e_1, e_2, \ldots, e_5$ being the hyperedges incident to $s$ in $H$. The graph $G_0$ is the bipartite representation of the hypergraph $H$. The graph $G_1$ is obtained from $G_0$ by replacing the vertex $s$ with a complete graph on a set $S_1=\{s_1, s_2, \ldots, s_5\}$ of $5$ vertices and replacing the edges $\{e_i,s\}$ with the edges $\{e_i, s_i\}$ for every $i\in [5]$. The graph $G_2$ is obtained from $G_1$ by performing element-connectivity preserving reduction operations on all edges of the clique $E_{G_1}[S_1]$. We denote the set of vertices in $G_2$ that are obtained as a result of the reduction operations as $S_2$, i.e., $S_2:=V(G_2)\setminus (V\cup E)$. 
    The graph $G_3$ is obtained from $G_2$ by performing as many element-connectivity preserving \emph{deletion} operations on the edges incident to $S_2$ as possible: in the shown example, the edge between $e_5$ and $S_2$ is deleted as a consequence of this step. The graph $G_4$ is obtained from $G_3$ by contracting each vertex in $S_2$ with all its neighbors.}
    \label{fig:proof-construction}
\end{figure}

Lemma \ref{lem:preserve-connectivity} shows that all pairwise connectivities remain the same between $H^*$ and $H$. Lemma \ref{lem:h-splitting-off} shows that the hypergraph $H^*$ is a complete h-splitting-off at $s$ from $H$. These two lemmas complete the proof of the theorem.
\end{proof}

\begin{lemma}\label{lem:preserve-connectivity}
    $\lambda_{H^*}(u,v)=\lambda_H(u,v)$ for every distinct $u, v\in V$.
\end{lemma}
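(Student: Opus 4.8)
The plan is to trace the element-connectivity through the five-stage construction $G_0 \to G_1 \to G_2 \to G_3 \to G_4$ and verify that the pairwise connectivity among the terminals $V$ is preserved at every step, so that the final hypergraph $H^*$ inherits $\lambda_{H^*}(u,v) = \lambda_H(u,v)$ from the identity $\kappa'_{(G_4,V)}(u,v) = \kappa'_{(G_0,V)}(u,v)$. The key observation, stated already in the intuition paragraph, is that for a bipartite incidence representation of a hypergraph (with the original vertices as terminals and the hyperedge-nodes as unit-capacity non-terminals), element-connectivity between terminals equals hyperedge-connectivity in the hypergraph. Thus the two endpoints of the argument reduce to $\kappa'_{(G_0,V)}(u,v) = \lambda_H(u,v)$ and $\kappa'_{(G_4,V)}(u,v) = \lambda_{H^*}(u,v)$, both following from Lemma~\ref{lem:elem-basic} and Lemma~\ref{lem:hyperedge-connectivity} together with the bipartiteness of $G_0$ and $G_4$.

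First I would handle the $G_0 \to G_1$ step, which is the only genuinely non-reduction transition: here $s$ (a terminal carrying capacity $\deg(s)$ in the element-connectivity computation of Lemma~\ref{lem:elem-basic}) is replaced by a clique $S_1$ of $d$ new unit-capacity non-terminals, with the $i$-th clique vertex inheriting the edge to hyperedge-node $e_i$. I would argue $\kappa'_{(G_1,V)}(u,v) = \kappa'_{(G_0,V)}(u,v)$ for all $u,v \in V$ by a flow/cut equivalence: any minimum element-cut in $G_0$ that uses $s$ as a capacity-$\deg(s)$ non-terminal corresponds to cutting all $d$ clique-incidence edges (or equivalently separating $S_1$) in $G_1$, and conversely, so the gadget faithfully simulates the capacity of $s$. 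This is the clique-gadget correctness claim and is the conceptual heart of why the reduction machinery applies.

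The steps $G_1 \to G_2$ and $G_2 \to G_3$ are then immediate consequences of Theorem~\ref{thm:elem-conn-preserving-reduction}: each reduction operation (contract-or-delete on an edge between two non-terminals) preserves all pairwise element-connectivities among the terminals $V$ by definition, and $G_3$ is obtained from $G_2$ purely by element-connectivity-preserving deletions, so $\kappa'_{(G_3,V)} = \kappa'_{(G_2,V)} = \kappa'_{(G_1,V)}$ throughout. The final step $G_3 \to G_4$ contracts each vertex of $S_2$ into its neighbors; I would argue this is connectivity-preserving because, after the exhaustive deletion phase, every remaining edge incident to $S_2$ is one that the reduction theorem declared \emph{non-deletable}, hence must be contractible — this is exactly the dichotomy of Theorem~\ref{thm:elem-conn-preserving-reduction} combined with the irreversibility principle of Lemma~\ref{lem:deletions-later-allow-deletions-now}, which guarantees that a non-deletable edge stays non-deletable and so contraction is forced and safe.

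The main obstacle I anticipate is the last step: I must be careful that contracting the $S_2$ vertices into their neighbors (i) does not merge two \emph{terminals} together and (ii) yields a genuinely bipartite $G_4$ whose hyperedge-nodes correspond to well-defined hyperedges on $V$. The danger is that an $S_2$-vertex could be adjacent to more than one hyperedge-node, or to a terminal, in a way that contraction would collapse incompatible roles; I would need to establish that after the reductions each $S_2$-vertex's neighborhood structure is compatible with being absorbed into a single hyperedge-node, so that $G_4$ is bipartite with $V$ on one side. I expect this to follow from the fact that terminals in $V$ were never eligible for reduction (only non-terminal--non-terminal edges are touched), so no terminal is ever contracted, and the structural claim that $G_4$ is bipartite is what the companion Lemma~\ref{lem:h-splitting-off} will ultimately exploit to read off $H^*$. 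Once bipartiteness of $G_4$ is in hand, the chain of equalities closes and $\lambda_{H^*}(u,v) = \kappa'_{(G_4,V)}(u,v) = \kappa'_{(G_0,V)}(u,v) = \lambda_H(u,v)$.
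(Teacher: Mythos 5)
Your proposal is correct and follows essentially the same route as the paper: the same chain $\kappa'_{(G_0,\cdot)} = \kappa'_{(G_1,V)} = \kappa'_{(G_2,V)} = \kappa'_{(G_3,V)} = \kappa'_{(G_4,V)}$ with Lemmas~\ref{lem:elem-basic} and~\ref{lem:hyperedge-connectivity} at both ends, the clique-gadget equivalence for $G_0\to G_1$ (which the paper argues by path rerouting rather than your cut correspondence, a cosmetic difference), and the dichotomy of Theorem~\ref{thm:elem-conn-preserving-reduction} together with Lemma~\ref{lem:deletions-later-allow-deletions-now} to justify the forced contractions in $G_3\to G_4$, which the paper merely formalizes as an induction over the edges incident to $S_2$.
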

\begin{proof}
    We have the following:
    \begin{enumerate}
        \item Firstly, $\kappa'_{(G_0, V+s)}(u,v)=\lambda_H(u,v)$ for every distinct $u,v\in V$. This is by Lemmas \ref{lem:elem-basic} and \ref{lem:hyperedge-connectivity}. 
        
        \item Next, we show that $\kappa'_{(G_1, V)}(u,v)=\kappa'_{(G_0, V+s)}(u,v)$ for every distinct $u,v\in V$. This follows from the results in \cite{CK14}. We include a proof for the sake of completeness. 
        Let $u, v\in V$. Suppose we have $\kappa'_{(G_1, V)}(u,v)$ element disjoint paths between $u$ and $v$ in $(G_1, V)$. Then, these paths will go through each node in $\{e_1, e_2, \ldots, e_d\}$ at most once. Each of these paths can be rerouted through $s$ to obtain $\kappa'_{(G_0, V+s)}(u,v)$ element disjoint paths between $u$ and $v$ in $(G_0, V+s)$. Hence, $\kappa'_{(G_0, V+s)}(u,v)\ge \kappa'_{(G_1, V)}(u,v)$. Next, suppose we have $\kappa'_{(G_0, V+s)}(u,v)$ element disjoint paths in $(G_0, V+s)$. Then, these paths will go through each node in $\{e_1, e_2, \ldots, e_d\}$ at most once. If a path uses the vertex $s$, then it has to be via $e_i-s-e_j$ for some $i, j\in [d]$. We can reroute such a path in $(G_1, V)$ via $e_i-s_i-s_j-e_j$. Such a rerouting of all paths leads to $\kappa'_{(G_0, V+s)}(u,v)$ element disjoint paths in $(G_1, V)$. Hence, $\kappa'_{(G_0, V+s)}(u,v)\le \kappa'_{(G_1, V)}(u,v)$. 
        
        \item We have that $\kappa'_{(G_2, V)}(u,v)=\kappa'_{(G_1,V)}(u,v)$ for every distinct $u, v\in V$. This is because, the graph $G_2$ was obtained from the graph $G_1$ via element-connectivity preserving reduction operations on the edges of $S_1$. 
        
        \item We have that $\kappa'_{(G_3, V)}(u,v)=\kappa'_{(G_2,V)}(u,v)$ for every distinct $u, v\in V$. This is because, the graph $G_3$ was obtained from the graph $G_2$ via element-connectivity preserving deletion operations on the edges incident to $S_2$. 
        
        \item Next, we show that $\kappa'_{(G_4, V)}(u,v)=\kappa'_{(G_3,V)}(u,v)$ for every distinct $u, v\in V$. 
        Let $F=\{f_1, f_2, \ldots, f_t\}$ be the set of edges incident to $S_2$ in $G_3$. 
        We recall that $G_3$ was obtained from $G_2$ by performing as many element-connectivity preserving deletion operations on the edges incident to $S_2$ as possible. Thus, none of the edges in $F$ 
        can be deleted to preserve element-connectivity between all terminal pairs. 
        
        Consider the sequence of graphs $G_0':=G_3$, $G_1':=G_0'/f_1$, $G_2':=G_1'/f_2$, $\ldots$, $G_t':=G_{t-1}'/f_t$. We observe that $G_t'=G_4$. We will prove that $\kappa_{(G_i',V)}(u,v)=\kappa_{(G_{i-1}',V)}(u,v)$ for every $u, v\in V$ and $i\in [t]$ by induction on $i$. For the base case, consider $i=1$: We recall that $G_3$ was obtained from $G_2$ by performing as many element-connectivity preserving deletion operations on the edges incident to $S_2$ as possible. Thus, deleting the edge $f_1$ does not preserve element-connectivity between all terminal pairs. Consequently, by Theorem \ref{thm:elem-conn-preserving-reduction}, contracting the edge $f_1$ preserves element-connectivty between all terminal pairs. Hence, $\kappa_{(G_1',V)}(u,v)=\kappa_{(G_{0}',V)}(u,v)$ for every $u, v\in V$. Next, we prove the induction step for $i\ge 2$. We note that $G_{i-1}'$ is a minor of $G_0$ after a sequence of element-connectivity preserving reduction operations. Consider the edge $f_i'$ corresponding to $f_i$ in $G_{i-1}'$. We claim that deleting $f_i'$ cannot preserve element-connectivity between all terminal pairs: if so, then by Lemma \ref{lem:deletions-later-allow-deletions-now}, deleting $f_i$ from $G_0=G_3$ would have preserved element-connectivity between all terminal pairs, a contradiction to the fact that none of the edges in $F$ can be deleted to preserve element-connectivity between all terminal pairs. Hence, by Theorem \ref{thm:elem-conn-preserving-reduction}, contracting the edge $f_i'$ in the graph $G_{i-1}'$ preserves element-connectivity between all terminal pairs. Hence, $\kappa_{(G_i',V)}(u,v)=\kappa_{(G_{i-1}',V)}(u,v)$ for every $u, v\in V$. 
            
        
        \item Finally, we have that $\lambda_{H^*}(u,v)=\kappa'_{(G_4, V)}(u,v)$ for every distinct $u, v\in V$. This is by Lemmas \ref{lem:elem-basic} and \ref{lem:hyperedge-connectivity}. 
    \end{enumerate}
\end{proof}

\begin{lemma}\label{lem:h-splitting-off}
    The hypergraph $H^*$ is a complete h-splitting-off at $s$ from $H$. 
\end{lemma}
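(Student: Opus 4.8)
The plan is to read the hyperedges of $H^*$ directly off the bipartite graph $G_4$ and then exhibit an explicit sequence of merge and trim operations producing them. First I would record the combinatorial effect of the construction. The step-2 reductions on the clique $E_{G_1}[S_1]$ partition $\{s_1,\dots,s_d\}$ into the super-vertices of $S_2$; equivalently they induce a partition of the $s$-incident hyperedges $\{e_1,\dots,e_d\}$ into groups, where $e_i$ and $e_{i'}$ lie in a common group $I_j$ exactly when $s_i,s_{i'}$ are contracted into the same super-vertex $T_j$. The step-3 deletions remove some of the edges $T_j e_i$; write $R_j\subseteq I_j$ for the indices whose edge survives and $D$ for the deleted ones. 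In $G_3$ every super-vertex $T_j$ is adjacent only to the hyperedge-nodes $\{e_i:i\in R_j\}$ (all clique edges have been resolved, and the $s_i$ were never adjacent to terminals), while each such $e_i$ is otherwise adjacent only to the terminals in $e_i-s$. Hence contracting $T_j$ with its neighbours in step 4 yields a single node whose terminal-neighbourhood is $\bigcup_{i\in R_j}(e_i-s)$, and the hyperedges of $H^*$ are of three kinds: the original hyperedges avoiding $s$ (unchanged); a hyperedge $e_i-s$ for each $i\in D$; and a hyperedge $\bigcup_{i\in R_j}(e_i-s)$ for each group $j$.

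Given this description, I would obtain $H^*$ from $H$ as follows: for each group $j$, apply merge-almost-disjoint operations to $\{e_i:i\in R_j\}$ in any order to build $\bigcup_{i\in R_j}e_i$, and then apply a trim-at-$s$ operation to every hyperedge that still contains $s$ (the merged hyperedges together with the $e_i$ for $i\in D$). After the trims, $s$ lies in no hyperedge, so $d_{H^*}(s)=0$ and this is a complete h-splitting-off in the sense of \Cref{defn:splitting-off}. The trims are always legal, so the only point needing justification is that the merges are legal, i.e.\ that the hyperedges within a single group $R_j$ pairwise intersect in exactly $\{s\}$; granting this, every partial union $\bigcup_{i\in R'}e_i$ meets the next hyperedge $e_{i^*}$ in $\bigcup_{i\in R'}(e_i\cap e_{i^*})=\{s\}$, so the operations go through.

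The main obstacle is therefore proving this pairwise almost-disjointness, and here I would combine \Cref{lem:deletions-later-allow-deletions-now} with a simple rerouting observation: \emph{if two non-terminals $p,q$ are joined by an edge and share a common terminal neighbour $w$, then deleting $pq$ preserves all terminal element-connectivities.} Indeed, $p$ and $q$ each have unit capacity, so at most one path of any maximum element-disjoint collection passes through each; the unique path using $pq$ can be locally rerouted through $w$ (replacing $p\,q$ by $p\,w,\,w\,q$, or shortcutting past $p$ or $q$ in the degenerate case that the path already meets $w$), and the edges $p\,w,\,w\,q$ are free because no other path visits $p$ or $q$. Now suppose for contradiction that $e_i,e_{i'}$ with $i,i'\in R_j$ share a terminal $w\in V$. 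Contracting the retained edge $T_j e_i$ — a valid element-connectivity preserving operation, exactly as used in \Cref{lem:preserve-connectivity} — produces a minor $G'$ of $G_3$ in which the merged node $M$ and the node $e_{i'}$ are adjacent and share the terminal neighbour $w$ inherited from $e_i$. By the rerouting observation the edge $M e_{i'}$ is deletable in $G'$, so by \Cref{lem:deletions-later-allow-deletions-now} the corresponding edge $T_j e_{i'}$ is deletable in $G_3$, contradicting the maximality of the step-3 deletions that left $T_j e_{i'}$ in place. Hence no two retained hyperedges of a group share a non-$s$ vertex, the merges are legal, and the sequence above exhibits $H^*$ as a complete h-splitting-off at $s$ from $H$.
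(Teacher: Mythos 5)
Your proof is correct and follows essentially the same route as the paper's: an explicit merge-then-trim procedure organized by the super-vertices of $S_2$, reduced to the key claim that hyperedges retained at a common super-vertex are pairwise almost-disjoint, which you establish exactly as the paper does (contract the retained edge $T_je_i$, reroute the unique path through the shared terminal $w$ to show $Me_{i'}$ becomes deletable, and invoke \Cref{lem:deletions-later-allow-deletions-now} to contradict the maximality of the step-3 deletions). The only cosmetic difference is that you package the rerouting step as a standalone observation about non-terminals with a common terminal neighbour, where the paper inlines it in \Cref{claim:near-disjoint}.
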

\begin{proof}
    We observe that $d_{H^*}(s)=0$ by construction. 
    We need to show that $H^*$ is obtained from $H$ by a sequence of h-splitting-off operations at $s$. 

    We observe that each node $e_1, e_2,\ldots, e_d$ is adjacent to a unique node of $S_1$ in $G_1$. Moreover, each node $e_1, e_2, \ldots, e_d$ is adjacent to a unique node of $S_2$ in $G_2$. We define the following: 
    \begin{enumerate}
    \item For each node $a\in S_2$, let 
    $F_{a}:=\{e_i: i\in [d], e_i\text{ is adjacent to $a$ in } G_3\}$. 
    \item Let $F_0:=\{e_i: i\in [d], e_i\text{ is not adjacent to any node of $S_2$ in } G_3\}$. 
    \end{enumerate}
    We consider the following procedure: 
    \begin{enumerate}
        \item Start with the current hypergraph being $H$.
        \item Pick an arbitrary ordering of the hyperedges in $F_0$, say $e_{i_1}, e_{i_2}, \ldots, e_{i_{t_0}}$. 
        \item For $j=1, 2, \ldots, t_0$:
        \begin{enumerate}
            \item Update the current hypergraph by trimming $s$ from $e_{i_j}$.
        \end{enumerate}
        \item Pick an arbitrary ordering of the vertices in $S_2$, say $a_1, a_2, \ldots, a_r$. 
        \item For each $i=1, 2, \ldots, r$:
        \begin{enumerate}
            \item Pick an arbitrary ordering of the hyperedges in $F_{a_i}$, say $e_{j_1}, e_{j_2}, \ldots, e_{j_{t_i}}$. 
            \item For each $k=2, \ldots, t_i$: 
            \begin{enumerate}
                \item Update the current hypergraph by merging $e_{j_k}$ with $e_{j_{k-1}}$ and label the merged hyperedge as $e_{j_{k-1}}$. 
            \end{enumerate}
            \item Update the current hypergraph by trimming $s$ from $e_{j_1}$. 
        \end{enumerate}
    \end{enumerate}
    The hypergraph obtained at the end of this procedure is exactly $H^*$. Each operation to update the current hypergraph in the above procedure is either a trim operation or a merge operation. It remains to show that the merge operation in the above procedure picks only almost-disjoint hyperedges. Claim \ref{claim:near-disjoint} proves this. 
\end{proof}

    \begin{claim}\label{claim:near-disjoint}
        Let $a\in S_2$. Let $e_i$ and $e_j$ be adjacent to $a$ in $G_3$. Then, the hyperedges $e_i$ and $e_j$ are almost-disjoint in $H$, i.e., $s$ is the only common vertex in the hyperedges $e_i$ and $e_j$. 
    \end{claim}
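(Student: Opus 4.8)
The plan is to argue by contradiction, exploiting the maximality built into $G_3$. Suppose the claim fails, so that $e_i$ and $e_j$ share a vertex $w\neq s$ of $H$. Since $w\in V$ is a terminal of the element-connectivity instance and $w\in e_i\cap e_j$, the hyperedge-nodes $e_i$ and $e_j$ are both adjacent to $w$ already in $G_0$, and this adjacency survives into $G_3$: the operations producing $G_1,\dots,G_3$ only rewire $s$-incident edges, reduce clique edges inside $E_{G_1}[S_1]$, and delete edges incident to $S_2$, so no edge between a hyperedge-node and a terminal is ever touched. Together with $ae_i$ and $ae_j$ this yields a $4$-cycle $a - e_i - w - e_j - a$ in $G_3$ whose only terminal is $w$. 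The goal is to show that the presence of this cycle forces one of $ae_i,ae_j$ to be deletable while preserving element-connectivity between all terminal pairs, contradicting the fact that $G_3$ admits no further deletion of an edge incident to $S_2$.

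First I would record the local structure around $a$, which is what drives the argument. Because $G_2$ is obtained from $G_1$ by applying reduction operations to \emph{every} clique edge of $E_{G_1}[S_1]$, distinct vertices of $S_2$ are pairwise non-adjacent, the neighbours of any $a\in S_2$ lie entirely among the hyperedge-nodes $\{e_1,\dots,e_d\}$, and each $e_\ell$ is adjacent to at most one vertex of $S_2$. Hence in $G_3$ every element-disjoint path through $a$ enters and leaves $a$ via two hyperedge-nodes. The intended mechanism is that the route $e_i - w - e_j - a$ substitutes for the edge $ae_i$: given a maximum family of element-disjoint $x$--$y$ paths, a path traversing $\cdots a\,e_i\cdots$ is rerouted as $\cdots a\,e_j\,w\,e_i\cdots$, which avoids $ae_i$. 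The crucial feature is that $w$ is a terminal and may be shared by several element-disjoint paths (by \Cref{lem:elem-basic}, terminals other than the endpoints carry capacity equal to their degree), so re-using $w$ is free.

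The main obstacle is that this rerouting injects the non-terminal $e_j$ into the path, and $e_j$ may already be used by another path of the family; converting the local $4$-cycle observation into a statement valid for \emph{all} terminal pairs at once is the delicate part. I would handle it through the max-flow/min-cut formulation of \Cref{lem:elem-basic}, proving the equivalent statement that $ae_i$ lies in no minimum element-cut. The key tool is a cut-replacement step: if, in a minimum element-cut, a non-terminal $z$ has two or more cut-edges crossing to the side not containing $z$, one may replace all those edges by the single vertex $z$ to obtain a strictly smaller cut, a contradiction. Applying this to $a$ and to $e_i$ shows that any minimum cut containing $ae_i$ must separate $a$ from $e_i$ with each having a unique neighbour across the cut; chasing the $4$-cycle then pins down the sides of $w$ and $e_j$ and exhibits an augmenting route, again contradicting minimality. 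I expect this cut-chasing, rather than the high-level idea, to be where the real work lies. An alternative that sidesteps the all-pairs flow bookkeeping is to argue from the construction history: using \Cref{thm:elem-conn-preserving-reduction} together with \Cref{lem:deletions-later-allow-deletions-now}, one tracks why $s_i$ and $s_j$ were merged into the common vertex $a$ and shows that a shared terminal $w$ of $e_i$ and $e_j$ would have kept the relevant reduction operation in the deletion branch, so that $s_i$ and $s_j$ would never have been contracted together.
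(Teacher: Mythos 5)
Your high-level setup matches the paper's: argue by contradiction from a shared terminal $w\neq s$, observe the resulting $4$-cycle $a$--$e_i$--$w$--$e_j$--$a$ whose only terminal is $w$, reroute through $w$, and contradict the maximality of the deletions defining $G_3$. You also correctly isolate the central difficulty: the naive reroute injects the non-terminal $e_j$ (or $e_i$), which may be occupied by another path of the family. But the fix you propose does not close this gap. The intermediate statement you aim for --- that the $4$-cycle alone forces $ae_i$ to lie in no minimum element-cut, i.e.\ that $ae_i$ is deletable --- is false. Take terminals $w,x,y,z,c$ and non-terminals $a,e_i,e_j,e_k,g,h_1,h_2$ with edges $ae_i,ae_j,ae_k$, $e_iw,e_ix$, $e_jw,e_jy$, $e_kz$, $gz,gy$, $h_1x,h_1c$, $h_2w,h_2c$. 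Then $\kappa'(z,c)=2$ (witnessed by $z,e_k,a,e_i,x,h_1,c$ and $z,g,y,e_j,w,h_2,c$), yet after deleting $ae_i$ every $z$--$c$ path is forced through $e_j$, so the connectivity drops to $1$ --- even though the $4$-cycle through the terminal $w$ is present. Here the minimum cut $\{ae_i,e_j\}$ contains the non-terminal $e_j$ itself, which is exactly the case your cut-replacement step cannot handle: there is no second cut-edge at $a$ or at $e_j$ to merge into a vertex, and no strictly smaller cut is produced. Finally, even a per-pair statement that ``one of $ae_i,ae_j$ avoids every minimum cut'' would not suffice, since deletability must hold for \emph{all} terminal pairs simultaneously and a priori $ae_i$ could be essential for one pair while $ae_j$ is essential for another.

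The ingredient you are missing is to actually \emph{use} the hypothesis, coming from the maximality of $G_3$, that $ae_i$ is not deletable: by \Cref{thm:elem-conn-preserving-reduction} this makes $ae_i$ \emph{contractible} while preserving all pairwise element-connectivities. The paper first contracts $\{e_i,a\}$ into a vertex $a'$; since $w\in e_i$, the terminal $w$ is now adjacent to $a'$, so the unique path of a maximum family that uses the edge $\{e_j,a'\}$ can be rerouted as $e_j\to w\to a'$. This detour uses only the terminal $w$ (free by \Cref{lem:elem-basic}) and two edges incident to $e_j$ and $a'$, which cannot appear on any other path because the non-terminals $e_j$ and $a'$ already lie on the rerouted path --- the contraction is precisely what dissolves the ``occupied non-terminal'' obstacle you identified. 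Hence $\{e_j,a'\}$ is deletable in the contracted graph, and \Cref{lem:deletions-later-allow-deletions-now} transfers this back to deletability of $\{e_j,a\}$ in $G_3$, contradicting maximality. Your second alternative (``track why $s_i$ and $s_j$ were merged'') is a restatement of the goal rather than an argument; if you pursue the min-cut route you must incorporate the non-deletability of $ae_i$ in some equivalent form.
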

    \begin{proof}
        We recall that the graph $G_3$ was obtained from $G_2$ by performing as many element-connectivity preserving deletion operations on the edges incident to $S_2$ as possible. Thus, none of the edge incident to $S_2$ in $G_3$ can be deleted to preserve element-connectivity between all terminal pairs. In particular, by Theorem \ref{thm:elem-conn-preserving-reduction}, the edge $\{e_i,a\}$ can be contracted to preserve element-connectivity between all terminal pairs. Let $G_3'$ be the graph obtained from $G_3$ by contracting the edge $\{e_i, a\}$ with $a'$ being the contracted vertex. Let $G_3''$ be the graph obtained from $G'$ by deleting the edge $\{e_j, a'\}$ (such an edge exists in $G'$ since $\{e_j, a\}$ exists in $G_3$). 
        
        For the sake of contradiction, suppose there exists a vertex $w\in (e_i\cap e_j)\setminus \{s\}$. We will exploit such a vertex $w$ to show that $G_3''$ preserves element connectivity between all terminal pairs. 
        Let $u, v \in V$ be arbitrary terminals. Let $\kappa':=\kappa'_{(G'_3, T)}(u,v)$. 
        We need to show that $\kappa'_{(G''_3, T)}(u,v)=\kappa'$. 
        Since $G''_3$ is obtained from $G'_3$ by deleting an edge, we have that $\kappa'_{(G''_3, T)}(u,v)\le\kappa'$. We need to show that deleting the edge $\{e_j,a'\}$ from $G'_3$ does not reduce the maximum number of element-disjoint paths between $u$ and $v$. 
        Let $P_1, P_2, \ldots, P_{\kappa'}$ be the element-disjoint paths between $u$ and $v$ in $(G'_3, T)$. If none of these paths use the edge $\{e_j,a'\}$, then these are also element-disjoint paths in $(G''_3,T)$ and we are done. Hence, some of these paths use the edge $\{e_j, a'\}$. Since the nodes $e_j, a'$ and the edge $\{e_j, a'\}$ are elements in $(G'_3, T)$, exactly one path among $P_1, P_2, \ldots, P_{\kappa'}$ uses the edge $\{e_j, a'\}$. Without loss of generality, let $P_1$ use the edge $\{e_j, a'\}$. Without loss of generality, let $P_1$ be $u_1=u, u_2, \ldots, u_k, u_{k+1}=e_j, u_{k+2}=a', u_{k+3}, \ldots, u_{|P_1|}=v$.  We now construct an alternate path $P_1'$ between $u$ and $v$ in $G'_3$ that does not use the edge $\{e_j, a'\}$ such that $P_1', P_2, \ldots, P_{\kappa'}$ are element-disjoint. Consider the vertex $w\in (e_i\cap e_j)\setminus \{s\}$ and the corresponding node $w\in V(G'_3)$. By construction, the vertex $w$ is a terminal vertex in $(G', T)$ and moreover, the edges $\{e_j, w\}$ and $\{e_i, w\}$ are present in $G_3$. Consequently, the edges $\{e_j,w\}$ and $\{a', w\}$ are present in $G'_3$. Thus, we have the path $P_1'$ given by $u_1=u, u_2, \ldots, u_k, u_{k+1}=e_j, w, u_{k+2}=a', u_{k+3}, \ldots, u_{|P_1|}=v$ in $G'_3$. We note that $P_1'$ does not use the edge $\{e_j, a'\}$. It remains to show that $P_1', P_2, P_3, \ldots, P_{\kappa'}$ are element-disjoint paths. In particular, we need to show that the edges $\{e_j, w\}$ and $\{w, a'\}$ and the vertices $e_j$ and $a'$ are not present in any of the other paths $P_2, P_3, \ldots, P_{\kappa'}$. However, since the vertices $e_j$ and $a'$ are already present in $P_1$, they are not present in $P_2, P_3, \ldots, P_{\kappa'}$. 
        
        The fact that the edge $\{e_j, a'\}$ can be deleted from $G'_3$ to preserve element-connectivity between all terminal pairs leads to a contradiction as follows: 
        consider the graph $G_3$ with terminal set $T$. The graph $G'_3$ is a minor of $G_3$ after an element-connectivity preserving reduction operation. We have shown that deleting the edge $\{e_j, a'\}$ from $G_3'$ preserves element-connectivity between all terminal pairs. By Lemma \ref{lem:deletions-later-allow-deletions-now}, it follows that deleting $\{e_j,a\}$ from $G_3$ preserves element-connectivity between all terminal pairs. This contradicts the fact that none of the edges incident to $S_2$ in $G_3$ can be deleted to preserve element-connectivity.

    \end{proof}


\paragraph{Acknowledgments.}
 Karthekeyan Chandrasekaran and Shubhang Kulkarni would like to thank Krist\'{o}f B\'{e}rczi and Tam\'{a}s Kir\'{a}ly for  discussions on the connections between hypergaph splitting-off and element connectivity.
\bibliographystyle{alpha}
\bibliography{references}


\end{document}